\documentclass[11pt]{article}

\usepackage[T1]{fontenc}
\usepackage[utf8]{inputenc}
\usepackage{amsmath,amssymb,amsthm}
\usepackage{bm}
\usepackage{graphicx}
\usepackage{color}
\usepackage[margin=1.7cm]{geometry}
\usepackage{times}
\usepackage{caption}
\usepackage{booktabs}
\usepackage[colorlinks=true,citecolor=blue,linkcolor=blue,urlcolor=blue]{hyperref}

\newcommand{\su}[1]{\mathrm{SU(#1)}}
\newcommand{\uu}[1]{\mathrm{U(#1)}}
\newcommand{\oo}[1]{\mathrm{O(#1)}}
\newcommand{\C}{\hat{C}_2}
\newcommand{\Var}{\mathrm{Var}}
\newcommand{\Tr}{\mathrm{Tr}}
\newcommand{\ket}[1]{|#1\rangle}
\newcommand{\bra}[1]{\langle#1|}
\newcommand{\avg}[1]{\langle#1\rangle}
\newcommand{\nd}{\hat{n}_d}
\newcommand{\Pdag}{P^{\dagger}}

\theoremstyle{plain}
\newtheorem{theorem}{Theorem}



\begin{document}

\begin{center}
	{\LARGE\bfseries Quantum-information fingerprints of partial dynamical symmetry\\[3pt]
		in the interacting boson model\par}
	\vspace{12pt}
	
	{\large Dhritimalya Roy\footnotemark\par}
	
	\vspace{5pt}
	{\itshape Department of Mathematics, Jadavpur University, Kolkata 700032, India\par}
	{\itshape Department of Physics, Presidency University, Kolkata 700073, India\par}
\end{center}

\footnotetext{\texttt{rdhritimalya@gmail.com}}

\begin{abstract}
\noindent
Partial dynamical symmetry (PDS) is an algebraic structure in which a prescribed symmetry is
neither exact nor completely broken: a subset of eigenstates keeps good quantum numbers and
remains solvable while the rest of the spectrum mixes. PDS is currently identified from
spectroscopic data---band-head energies, level systematics, and $B(E2)$ ratios. We ask whether it
also has a purely structural signature in the eigenstates, and find that it does, though not in the
magnitude of entanglement. The natural diagnostic is the variance of a symmetry Casimir, the label
variance $\Var\,\C[G]$, which we show coincides with a block-coherence entropy and a block impurity:
all three vanish exactly when a state carries a single irreducible-representation label. Resolved
state by state, this quantity is zero on the solvable subset and of order $N^2$ on the mixed states,
at stable symmetry points and at Leviatan's first- and second-order critical points, where it takes
two distinct forms set by the order of the transition. The magnitude of bipartite entanglement, by
contrast, does not separate solvable from mixed states and drifts even where the labels are exact.
We anchor the analysis in $^{168}$Er, connect the block purity to the ``purity/coherence'' language
of the quasi-dynamical-symmetry literature, and show the label variance is uncorrelated with
multipartite entanglement and with magic. Finally we encode the model on a qubit register and
prepare its solvable and mixed eigenstates variationally, as a step toward evaluating the diagnostic
on a quantum device.
\end{abstract}

\vspace{10pt}

\section{Introduction}

The interacting boson model (IBM) describes the low-lying collective states of even--even nuclei
in terms of $s$ ($L=0$) and $d$ ($L=2$) bosons and possesses a rich algebraic
structure~\cite{IachelloArima,VanIsacker2018}. Its three dynamical-symmetry (DS) limits, 
$\uu5$, $\su3$ and $\oo6$, correspond to spherical vibrator, axially deformed rotor and
$\gamma$-unstable shapes, and are distinguished by the fact that the entire spectrum is
analytically solvable and every eigenstate carries the good quantum numbers of a nested chain of
algebras. Exact dynamical symmetries are, however, rare in real nuclei. What is ubiquitous, as
Van Isacker's review emphasizes, are their partial and quasi extensions~\cite{VanIsacker2018}.

Partial dynamical symmetry (PDS), introduced by Leviatan, relaxes the requirement that
\emph{all} states be solvable~\cite{Leviatan1996,AlhassidLeviatan1992,Leviatan2011}. In a PDS
Hamiltonian, a structurally selected subset of eigenstates keeps exact quantum numbers and
closed-form energies, while the rest mix. This selectivity survives at the critical
point of a quantum phase transition (QPT), where the competing symmetries are conserved exactly
by some states and broken by others~\cite{Leviatan2007}. Quasi-dynamical symmetry (QDS), introduced
by Rochford and Rowe~\cite{RochfordRowe1988,RoweRochfordRepka1988}, is a
distinct, related notion in which a band of states shares a common intrinsic structure and
mimics a dynamical symmetry over a range of parameters through a coherent mixing of
irreducible representations (irreps). In the interacting boson model, Kremer \emph{et al.}~\cite{Kremer2014}
identified a region exhibiting both structures in the ground band and, notably for our purposes,
labelled the $\oo6$-PDS aspect ``purity'' and the $\su3$-QDS aspect ``coherence'', the algebraic
vocabulary that we make into literal density-matrix quantities below, without computing the
corresponding quantum-information objects themselves.

Quantum-information (QI) diagnostics have recently found use in nuclear
structure. Entanglement entropy and mutual information have served as order parameters for
shape-phase transitions in the IBM~\cite{Jafarizadeh2022,Ghapanvari2025}, as probes of pairing
and deformation in the shell model~\cite{PerezObiol2023}, and as signatures of configuration
mixing~\cite{Shinde2026}; nonstabilizerness (``magic'') has been advanced as a resource that
bipartite entropy misses~\cite{Brokemeier2025}. These studies have a common design: they track
the entanglement of the ground state (or a few low-lying states) as an external control parameter
is swept, using entanglement as a mean-field order parameter for a phase transition. None asks
whether individual eigenstates \emph{within a fixed spectrum} differ, and none connects QI to the
symmetry selectivity of PDS.

That gap is the subject of this paper. Existing interacting-boson entanglement studies use the
entanglement of the ground state as a mean-field order parameter for the shape-phase transition
(we contrast them in detail in Sec.~\ref{sec:prior}); here we instead resolve quantum-information
measures state by state across the exact spectrum and along symmetry-preserving parameter flows,
and show that in the $sd$-IBM
the signature of a partial dynamical symmetry is not the magnitude of entanglement but the vanishing
of a symmetry-label fluctuation on a structurally selected subset of states.

Our central object is a single measurable with three equivalent faces (Sec.~\ref{sec:measurable}):
the label variance $\Var\,\C[G]$ of an eigenstate for a chosen subalgebra $G$, the
block-coherence entropy $S_G$, and the block impurity $1-\mathcal{P}_G$. They vanish
together, precisely when the state carries an exact $G$-irrep label. With this diagnostic we
report three findings. First, the label variance separates the solvable states from the mixed ones
sharply, zero versus $O(N^2)$, whereas the magnitude of bipartite entanglement does not:
at a stable $\su3$-PDS point the static entanglement of solvable and mixed states overlaps, and the
solvable states are distinguished instead by the rigidity of their entanglement along the
PDS-preserving directions of the Hamiltonian. Second, the diagnostic survives at criticality (which
does not by itself produce PDS), taking two forms set by the order of the transition: a solvable
subset amid maximal mixing at a first-order critical point, and a single conserved label shared by
the whole spectrum at a second-order one. Third, the block purity makes the ``purity'' and
``coherence'' vocabulary of the quasi-dynamical literature literal, and the label variance is
uncorrelated with multipartite entanglement and with magic, so it is not a restatement of either.
We anchor the results in $^{168}$Er.

Throughout, the physics result, a structural, state-resolved criterion for PDS, is the
contribution. The results of Secs.~\ref{sec:stable}--\ref{sec:prior} are obtained by exact
diagonalization; Sec.~\ref{sec:qsim} then encodes the model on a qubit register and shows the same
fingerprint can be prepared and read out variationally, with time evolution appearing only as a
tool and no claim of quantum advantage.

\section{Model and conventions}
\label{sec:model}

We work in the $sd$-IBM at fixed total boson number $N$, the linear Casimir of $\uu6$, in the
$m$-scheme boson Fock basis. All spectra are computed by exact diagonalization in the $M=0$
block, which contains exactly one member of every angular-momentum multiplet. The $\su3$-PDS
Hamiltonian of Leviatan~\cite{Leviatan1996,Leviatan2020} is
\begin{equation}
\hat H = h_0\,\Pdag_0 \tilde P_0 + h_2\,\Pdag_2 \cdot \tilde P_2 + C\,\hat L^2 ,
\label{eq:H}
\end{equation}
with
\begin{align}
\Pdag_0 &= d^{\dagger}\!\cdot d^{\dagger} - 2 (s^{\dagger})^2 , \\
\Pdag_{2\mu} &= 2\, s^{\dagger} d^{\dagger}_{\mu} + \sqrt{7}\,(d^{\dagger} d^{\dagger})^{(2)}_{\mu} .
\end{align}
The operators $\Pdag_0$ and $\Pdag_{2\mu}$ annihilate the $\su3$ intrinsic condensate, so that
$\hat H$ has an analytically solvable set of eigenstates: the ground band $(2N,0)$ with $K=0$,
$L=0,2,\dots,2N$ and energy $E = C\,L(L+1)$, and the $\gamma^k$ bands $(2N-4k,2k)$ with $K=2k$,
$L=2k,\dots,2N-2k$ and energy $E = 6 h_2 k\,(2N-2k+1) + C\,L(L+1)$. The remaining eigenstates are
$\su3$-mixed.

We fix operator normalizations by the identity~\cite{Leviatan2020}
\begin{equation}
\Pdag_0 \tilde P_0 + \Pdag_2 \cdot \tilde P_2 = -\,\C[\su3] + 2\hat N(2\hat N + 3),
\label{eq:arbiter}
\end{equation}
which our implementation reproduces to $\le 2\times10^{-13}$ ($N=3,4,5,8,10$), where
$\C[\su3] = 2\,\hat Q\cdot\hat Q + \tfrac34 \hat L^2$ with the $\su3$ quadrupole
$\hat Q_\mu = s^{\dagger}\tilde d_\mu + d^{\dagger}_\mu s - \tfrac{\sqrt7}{2}(d^{\dagger}\tilde d)^{(2)}_\mu$.
Equation~\eqref{eq:arbiter} is used as an unambiguous arbiter of Clebsch--Gordan and
tensor-normalization conventions.

For the critical-point analysis we use Leviatan's constructions~\cite{Leviatan2007}. The
first-order (spherical$\leftrightarrow$prolate) critical Hamiltonian is
\begin{equation}
\hat H(\beta_0) = h_2\, \Pdag_2(\beta_0)\cdot\tilde P_2(\beta_0), \quad
\Pdag_{2\mu}(\beta_0) = \beta_0\, s^{\dagger} d^{\dagger}_\mu + \sqrt{\tfrac72}(d^{\dagger}d^{\dagger})^{(2)}_\mu ,
\label{eq:Hfo}
\end{equation}
whose kernel is a deformed ground band for all $\beta_0$; at the special value $\beta_0=\sqrt2$ it
acquires, in addition, the full solvable $\su3$ tower, and $\hat H(\sqrt2)=\tfrac12\times[\,h_2$-term
of Eq.~\eqref{eq:H}$]$ (verified to $3\times10^{-14}$), tying it to the stable Hamiltonian. The
second-order (spherical$\leftrightarrow\gamma$-unstable) critical Hamiltonian is
\begin{equation}
\hat H = \varepsilon\,\nd + A\big[(d^{\dagger}\!\cdot d^{\dagger} - (s^{\dagger})^2) + \mathrm{h.c.}\big],
\quad \varepsilon = 4(N-1)A ,
\label{eq:Hso}
\end{equation}
which is an $\oo5$ scalar: $[\hat H, \C[\oo5]] = 2.7\times10^{-12}$ in our implementation, so
that every eigenstate retains an exact $\oo5$ seniority $\tau$.

\section{The measurable: label variance, block coherence and block purity}
\label{sec:measurable}

Let $G$ be a subalgebra of the spectrum-generating algebra with quadratic Casimir $\C[G]$, whose
distinct eigenvalues $f_2(\lambda)$ label the irreps $\lambda$ contained in the model space. The
$\C[G]$ eigenspaces give an orthogonal decomposition $\mathcal H = \bigoplus_\lambda
\mathcal H_\lambda$ with projectors $\Pi_\lambda$. For a normalized pure state $\ket\psi$ define
the block-probability distribution $P_\lambda(\psi) = \bra\psi \Pi_\lambda \ket\psi =
\|\Pi_\lambda\ket\psi\|^2$, and the three quantities
\begin{align}
\text{label variance:}\quad & \Var_\psi(\C) = \bra\psi \C^2 \ket\psi - \bra\psi \C \ket\psi^2, \\
\text{block coherence:}\quad & S_G(\psi) = -\sum_\lambda P_\lambda \ln P_\lambda, \\
\text{block purity:}\quad & \mathcal P_G(\psi) = \sum_\lambda P_\lambda^2 = \Tr(\rho_G^2),
\end{align}
where $\rho_G = \sum_\lambda P_\lambda \ket\lambda\!\bra\lambda$ is the classical (dephased) block
state. $S_G$ is the Shannon entropy of $\{P_\lambda\}$ and, for a pure state, equals the relative
entropy of coherence with respect to the block-diagonal incoherent set.

Each of these three quantities vanishes exactly when $\ket\psi$ occupies a single block, as the
following elementary statement records. Its point is not the proof but that the same physical
condition---an exact irrep label---is read simultaneously as a variance, as a coherence, and as a
purity; this is what lets us make the ``purity/coherence'' language of Sec.~\ref{sec:kremer} literal.

\begin{theorem}
\label{thm:equiv}
For any pure state $\ket\psi$ the following are equivalent:
(a) $\ket\psi \in \mathcal H_\lambda$ for a single $\lambda$ (an exact $G$-irrep label);
(b) $\Var_\psi(\C[G]) = 0$;
(c) $S_G(\psi) = 0$;
(d) $\mathcal P_G(\psi) = 1$.
\end{theorem}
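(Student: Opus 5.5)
The whole argument runs through the block-probability vector $\{P_\lambda(\psi)\}$. I will show that each of (b), (c), (d) is equivalent to this vector being a point mass, i.e.\ $P_{\lambda_0}=1$ for one $\lambda_0$ and $P_\lambda=0$ otherwise. I will also show that this point-mass condition is equivalent to (a). The basic facts are that the $\Pi_\lambda$ are orthogonal projectors with $\sum_\lambda \Pi_\lambda = \mathbb 1$ on the model space, so $P_\lambda\ge 0$ and $\sum_\lambda P_\lambda = \|\psi\|^2 = 1$.

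\textbf{(a)$\Leftrightarrow$ point mass.} If $\ket\psi\in\mathcal H_{\lambda_0}$, then $\Pi_{\lambda_0}\ket\psi=\ket\psi$, so $P_{\lambda_0}=1$, and the others vanish by normalization. Conversely, if $P_{\lambda_0}=1$, then
\[
\|(\mathbb 1-\Pi_{\lambda_0})\psi\|^2 = 1-P_{\lambda_0}=0,
\]
so $\ket\psi=\Pi_{\lambda_0}\ket\psi\in\mathcal H_{\lambda_0}$.

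\textbf{(b).} Use the spectral decomposition $\C[G]=\sum_\lambda f_2(\lambda)\Pi_\lambda$, which is how the blocks were defined. This gives
\[
\Var_\psi(\C)=\sum_\lambda P_\lambda\bigl(f_2(\lambda)-\bar f\bigr)^2, \qquad \bar f=\sum_\lambda P_\lambda f_2(\lambda).
\]
The variance therefore vanishes iff $f_2(\lambda)=\bar f$ on the support of $\{P_\lambda\}$. Since the $f_2(\lambda)$ are \emph{distinct} by construction, the support is a single $\lambda$.

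This is the one point needing care. If two different irreps shared a Casimir eigenvalue, the variance would vanish on their superposition without a unique label. The statement avoids this because $\mathcal H_\lambda$ is defined as a $\C[G]$ eigenspace, so "label" here means eigenvalue of $\C[G]$. I would remark on this explicitly.

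\textbf{(c) and (d).} For (c), each term $-P\ln P$ (with $0\ln 0=0$) is nonnegative and vanishes iff $P\in\{0,1\}$. Hence $S_G=0$ iff every $P_\lambda\in\{0,1\}$, which together with $\sum_\lambda P_\lambda=1$ means a point mass.

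For (d), $\sum_\lambda P_\lambda^2\le \sum_\lambda P_\lambda = 1$ because $0\le P_\lambda\le 1$. Equality holds iff $P_\lambda^2=P_\lambda$ for every $\lambda$, i.e.\ each $P_\lambda\in\{0,1\}$, which again gives a point mass.

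Chaining these equivalences through the point-mass condition proves the theorem. No step is a real obstacle; the only subtlety is the distinctness of the $f_2(\lambda)$ used in (b).
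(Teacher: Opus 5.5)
Your proof is correct and takes the same approach as the paper. Both arguments write $\Var_\psi(\C)=\sum_\lambda P_\lambda[f_2(\lambda)-\avg{\C}]^2$, use injectivity of $\lambda\mapsto f_2(\lambda)$, and reduce (b), (c), (d) to $\{P_\lambda\}$ being a point mass; you additionally spell out the steps the paper leaves implicit, namely the converse from a point mass to (a) via $\|\psi-\Pi_{\lambda_0}\psi\|^2=0$ and the equality cases of $-P\ln P\ge 0$ and $\sum_\lambda P_\lambda^2\le 1$.
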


\begin{proof}
Since $\C[G]$ acts as the scalar $f_2(\lambda)$ on each block,
$\avg{\C} = \sum_\lambda P_\lambda f_2(\lambda)$ and $\avg{\C^2} = \sum_\lambda P_\lambda f_2(\lambda)^2$,
so
\begin{equation}
\Var_\psi(\C) = \sum_\lambda P_\lambda\,[\,f_2(\lambda) - \avg{\C}\,]^2 ,
\label{eq:varid}
\end{equation}
the variance of $f_2(\lambda)$ under $\{P_\lambda\}$. Distinct irreps have distinct Casimir
eigenvalues in the model space, so $\lambda\mapsto f_2(\lambda)$ is injective and each of (b) the
vanishing of Eq.~\eqref{eq:varid}, (c) the vanishing of the Shannon entropy $S_G$, and (d) the
saturation $\mathcal P_G=\sum_\lambda P_\lambda^2=1$ holds iff $\{P_\lambda\}$ is a point mass, i.e.\
iff $\ket\psi$ occupies a single block~(a). We verified Eq.~\eqref{eq:varid} numerically to
$3\times10^{-10}$.
\end{proof}

Theorem~\ref{thm:equiv} identifies the algebraic content of PDS with a QI condition. A PDS
Hamiltonian is exactly one for which the set $\{\psi:\Var_\psi(\C[G])=0\}$ is a nonempty proper
subset of the eigenstates (type~I), or for which one $G$ has $\Var=0$ on \emph{all} eigenstates
while a competing $G'$ does not (type~II). Although the three faces coincide at the point of exact
symmetry, they differ in what it takes to evaluate them, and this is why we adopt the variance as the
primary diagnostic. The variance $\Var_\psi(\C[G])$ requires only two expectation values,
$\avg{\C[G]}$ and $\avg{\C[G]^2}$, so it is basis-free and needs no explicit block decomposition:
the Casimir is measured directly, on prepared states as well as computed ones. The entropy $S_G$ and
the purity $\mathcal P_G$, by contrast, require the full block-probability distribution
$\{P_\lambda\}$, i.e.\ the projections onto every irrep, which is more information and more work to
obtain. We therefore report the variance throughout and use $S_G$ when a coherence-monotone framing
is wanted, or $\mathcal P_G$ when a purity framing is wanted; the latter makes the ``purity'' and
``coherence'' language of QDS literal on the same footing (Sec.~\ref{sec:kremer}). We stress that $\Var(\C[G])$ and the
bipartite $s|d$ entropy are functions of \emph{different} structures, the $G$-block decomposition
versus the $s$--$d$ tensor factorization, so they are logically independent.

At symmetric endpoints, $H$-degeneracies leave the label undefined within a degenerate eigenspace;
we resolve them by a degeneracy-aware refinement (diagonalizing $\hat L^2$, then $\C[G]$, then
$\nd$ within each degenerate block) before computing any label variance. All ``solvable versus
mixed'' classifications use the threshold $\Var\,\C[G] < 10^{-6}$, which is robust across $N$
given the $O(N^2)$ scale of the mixed states. Bipartite entropies use the $s|d$ boson bipartition
and carry the usual particle-number-superselection caveat for boson-mode entanglement, which we
state but which does not affect the label-variance results.

\section{Stable-point results}
\label{sec:stable}

\begin{figure}[t]
\centering
\includegraphics[width=0.82\textwidth]{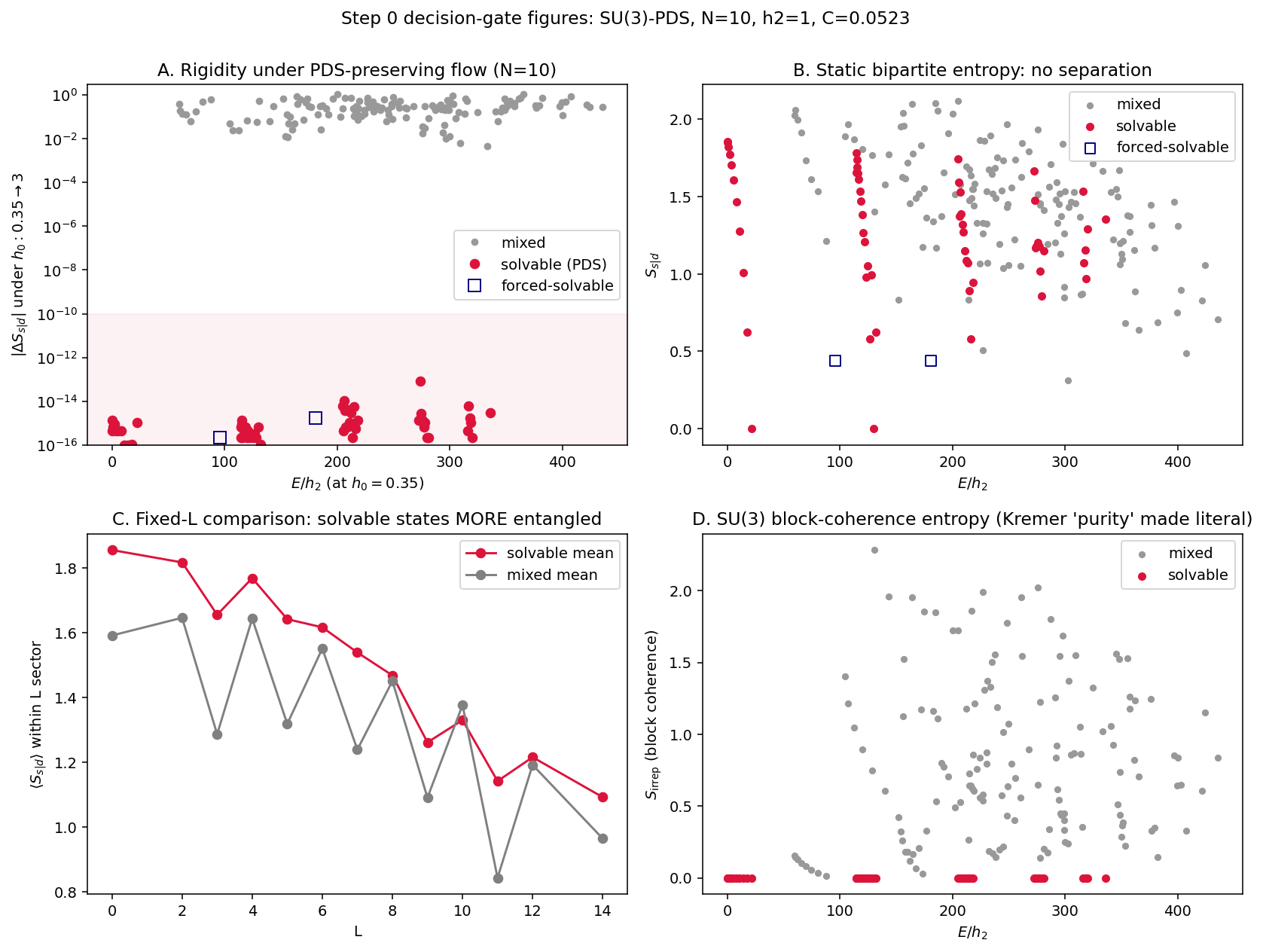}
\caption{Decision gate at a stable $\su3$-PDS point ($N=10$). The static \emph{magnitude} of the
$s|d$ entanglement entropy does not separate solvable (exact-label) from mixed eigenstates: the
distributions overlap, and within a fixed angular momentum the solvable states are in fact
\emph{more} entangled (a Simpson's-paradox reversal). Entanglement magnitude alone is therefore
not a PDS diagnostic.}
\label{fig:gate}
\end{figure}

\begin{figure}[t]
\centering
\includegraphics[width=0.82\textwidth]{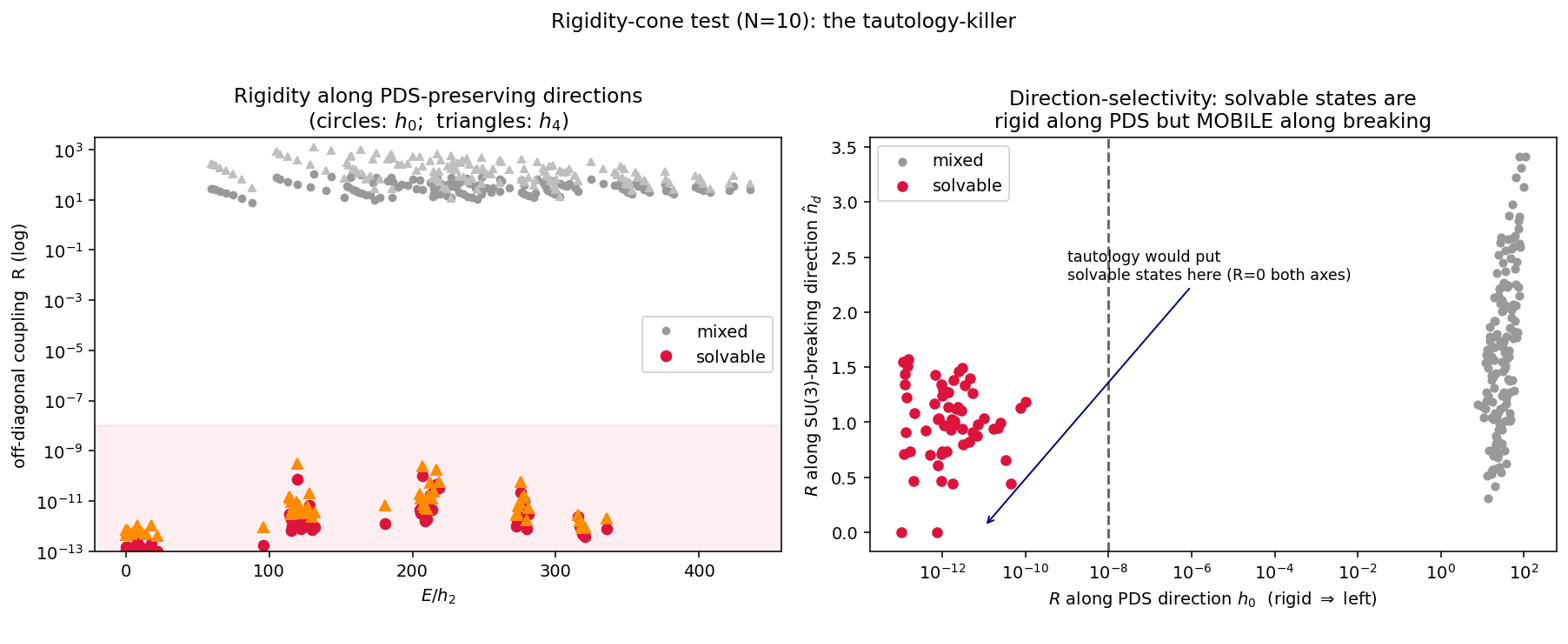}
\caption{Cone rigidity at a stable $\su3$-PDS point ($N=10$). The off-diagonal coupling
$R_D=\|Q\,D\ket\psi\|$ measures the first-order response of each eigenstate to a deformation along
direction $D$. Along the PDS-preserving directions $h_0$ and $h_4$ every solvable state is rigid to
numerical zero ($R\lesssim3\times10^{-10}$, lower band) while every mixed state responds strongly
($R\gtrsim70$, upper band); the separation is direction-selective, since along the
$\su3$-breaking direction $\nd$ the solvable states respond as much as the mixed ones ($R\sim1$,
right). Rigidity is a property of the PDS-preserving operator cone, not of the states in isolation.}
\label{fig:cone}
\end{figure}

We first test the naive expectation that solvable states are simply less entangled.
Figure~\ref{fig:gate} shows they are not: at a stable $\su3$-PDS point the static $s|d$ entanglement
entropy does not separate the solvable subset from the mixed states, and within a fixed $L$ the
solvable states are, if anything, \emph{more} entangled. This negative result matters, because it
rules out reading any later separation as ``solvable states are simply simpler.''

The signature instead lives in the response of the eigenstates to a symmetry-preserving
deformation. Consider the two-parameter family of PDS-preserving operators (the coefficients
$h_0$ and a higher-order term $h_4$ that both leave the solvable subset intact). We measure the
first-order response of each eigenstate, the off-diagonal coupling $R_D=\|Q\,D\ket\psi\|$ with $Q$
projecting out the state's own degenerate eigenspace, which vanishes exactly when the state is rigid
along direction $D$. Along the PDS-preserving directions every solvable state is rigid to numerical
zero ($R\lesssim3\times10^{-10}$ at $N=8,10$), while every mixed state responds strongly
($R\gtrsim70$)---a separation of more than ten orders of magnitude with zero misclassifications
(Fig.~\ref{fig:cone}, $N=8,10$). The rigidity is \emph{direction-selective}: it holds
along $h_0$ \emph{and} $h_4$, the two independent PDS-preserving directions, but is lost along the
$\su3$-breaking direction $\nd$, where the solvable states respond as much as the mixed ones
($R_{\nd}\sim1$ for both). This
direction selectivity answers the objection that rigidity is a tautology (``solvable states don't
change because they are eigenstates''): the same states are mobile along a symmetry-breaking
direction, so rigidity is a property of the PDS-preserving cone, not of the states in isolation.
The solvable subspace is invariant under the PDS-preserving flow (subspace overlap $1.000000$ at
$N=12$), confirming the result is not a small-$N$ accident.

\section{Criticality}
\label{sec:crit}

\subsection{PDS is not generic to the quantum phase transition}

\begin{figure}[t]
\centering
\includegraphics[width=\textwidth]{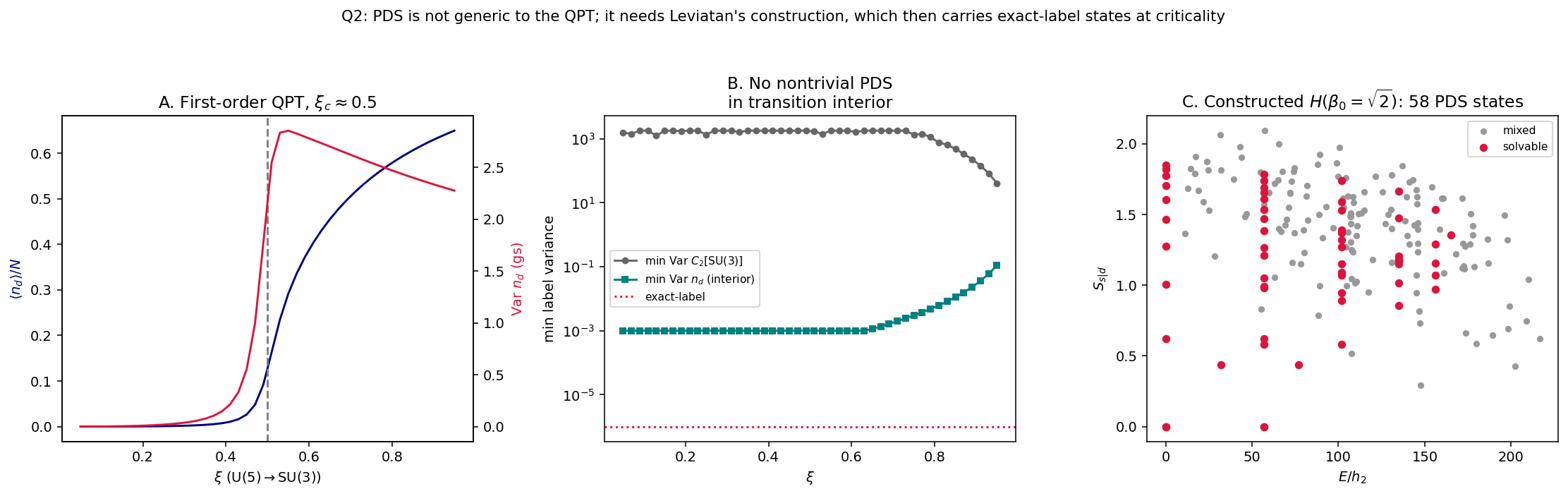}
\caption{First-order critical point, $\hat H(\beta_0=\sqrt2)$, $N=10$ (type~I coexistence). Left:
the label variance $\Var\,\C[\su3]$ resolves the spectrum into an exact-label subset
($\Var=0$, $58$ solvable states: the $\su3$ tower plus two $\uu5$ states) and a doubly-mixed
background ($\Var=O(N^2)$, $145$ states); the deformed $L=0$ band head sits at
$\avg{\nd}/N=0.63$. Centre: the same
separation seen in $\Var\,\nd$. Right: the solvable subset coexists with the mixed states at all
excitation energies, not only near the ground state.}
\label{fig:q2}
\end{figure}

Before turning to Leviatan's critical Hamiltonians we establish that PDS is a
special construction, not an automatic property of any IBM QPT. On the canonical consistent-$Q$
line $\hat H(\xi) = (1-\xi)\nd - (\xi/4N)\,\hat Q\cdot\hat Q$ we reproduce the textbook first-order
QPT---the fidelity susceptibility peaks at $\xi_c=0.500$ and $\avg{\nd}/N$ jumps from $0$ to
$0.66$---yet we find \emph{zero} nontrivial exact-label states in the interior for every
$\xi\in\{0.10,\dots,0.90\}$ (excluding the trivial $n_d=0$ and $n_d=N$ multiplets, which are exact
at all $\xi$). Exact symmetry appears only at the DS endpoints and via Leviatan's off-line PDS
construction. This is the sharp statement that separates ``PDS at criticality'' from ordinary
critical-point phenomenology, and it pre-empts the objection that PDS is merely a relabelling of a
transition.

\subsection{First order: a solvable subset amid maximal mixing (type~I)}

At the first-order critical Hamiltonian $\hat H(\beta_0=\sqrt2)$ of Eq.~\eqref{eq:Hfo}, the
spectrum splits sharply (Fig.~\ref{fig:q2}). For $N=10$ (a $203$-dimensional $M=0$ block), $58$
exact-label states coexist with $145$ doubly-mixed states. These $58$ comprise the analytic $\su3$
solvable tower together with the two additional $\uu5$ solvable states of Leviatan's
construction~\cite{Leviatan2007} (the $|n_d=\tau=L=0\rangle$ and $|n_d=\tau=L=3\rangle$ states);
the $\su3$ tower is exact and order-independent, whereas the assignment of the two remaining states
to $\su3$ or $\uu5$ depends on how the residual degeneracy is resolved, so we quote the total. The
deformed $L=0$ band head sits at $\avg{\nd}/N=0.63$, i.e.\ genuinely on the deformed
side rather than at weak coupling. This realizes PDS type~I coexistence numerically: the solvable
subset has $\Var\,\C[\su3]$ or $\Var\,\nd$ zero to numerical precision while the background
is of order $N^2$. The $\su3$ solvable tower follows the
$(2N-4k,2k)$ band-projection count---$22, 35, 56, 79$ for
$N=6,8,10,12$ (exact), so that the solvable fraction shrinks from about $0.5$ to $0.2$: the diagnostic
\emph{sharpens} as $N$ grows, the opposite of a finite-size artifact.

\subsection{Entanglement magnitude is not the criticality signature}

The deformed ground band is the kernel of $\hat H(\beta_0)$ for all $\beta_0$ and is therefore
both eigenvalue- and label-rigid. Its wavefunction, however, is projected from the
$\beta_0$-dependent intrinsic condensate and \emph{rotates} with $\beta_0$: the kernel at one
$\beta_0$ has nonzero principal angles with the kernel at a neighbouring value, so the ground band
is genuinely $\beta_0$-dependent. Consequently its $s|d$ entropy \emph{drifts} smoothly with
$\beta_0$ even though its labels never change. The stable-point cone rigidity
does not transfer to the critical control parameter. The surviving, sharp signature is
$\Var\,\C[\su3]=0$ on the solvable subset versus $O(N^2)$ on the mixed states, a label statement,
not an entanglement-magnitude statement.

\subsection{Second order: one shared label and a seniority ladder (type~II)}

\begin{figure}[t]
\centering
\includegraphics[width=\textwidth]{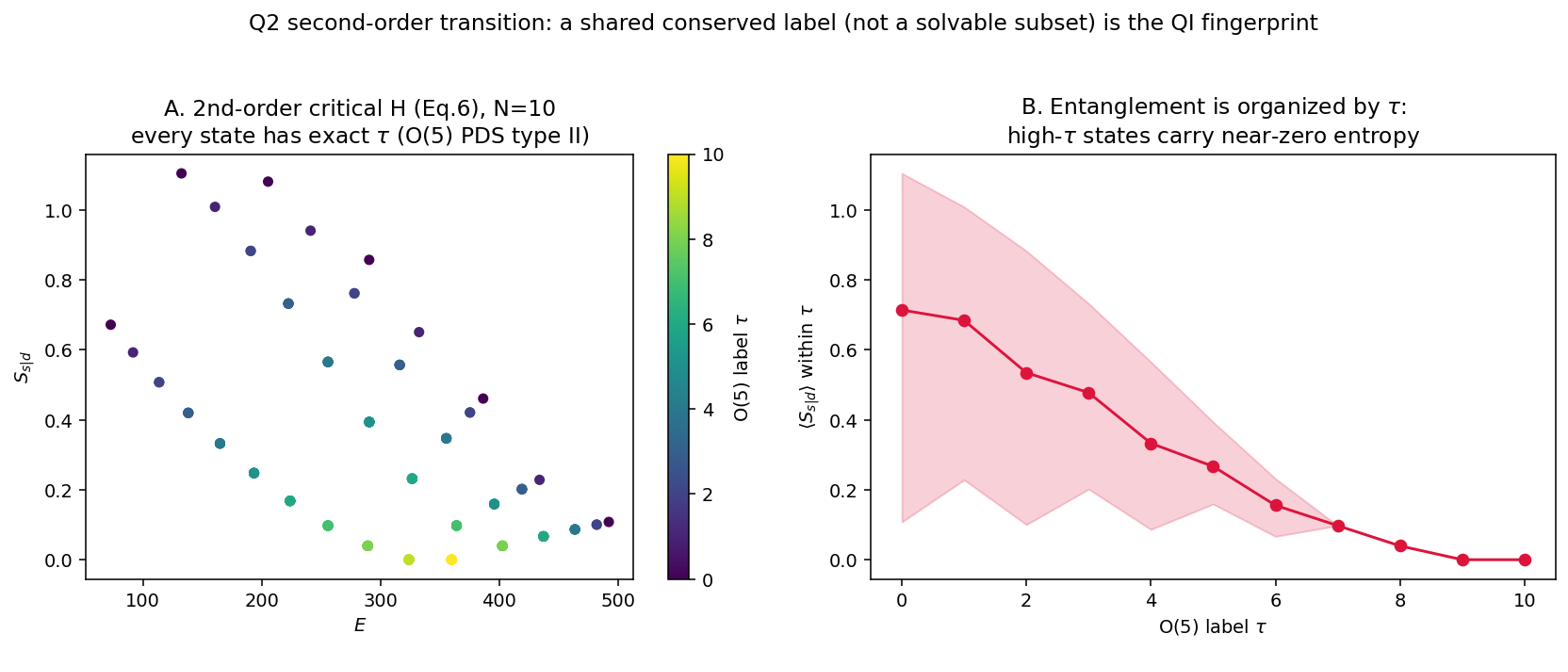}
\caption{Second-order critical point, Eq.~\eqref{eq:Hso}, $N=10$ (type~II). Every eigenstate
carries an exact $\oo5$ seniority: $\max_\psi \Var\,\C[\oo5] = 2.7\times10^{-11}$ over the whole
spectrum (left). The $s|d$ entanglement is graded monotonically by $\tau$ (right), $\avg{S}=0.71$
at $\tau=0$ falling to $0.00$ at $\tau=N$: the entanglement lives in the low-seniority sector and
high-seniority states are near-product. This seniority-graded ladder differs qualitatively from the first-order solvable-subset
picture of Fig.~\ref{fig:q2}.}
\label{fig:q2so}
\end{figure}

The second-order critical Hamiltonian of Eq.~\eqref{eq:Hso} is an $\oo5$ scalar, so \emph{every}
eigenstate keeps an exact seniority $\tau$: the label variance $\Var\,\C[\oo5]$ is at most
$2.7\times10^{-11}$ over the entire spectrum (Fig.~\ref{fig:q2so}). This is PDS type~II: not a
solvable subset but a single conserved label shared by all states. The label organizes the
entanglement into a seniority-graded ladder, the mean $s|d$ entropy falls monotonically from
$0.71$ at $\tau=0$ to $0.16$ at $\tau=6$ to $0.00$ at $\tau=N$, so that high-seniority states are
near-product and the entanglement is concentrated in the low-$\tau$ sector. The QI fingerprint here differs in kind from the first-order solvable-subset picture, and that
difference tracks the order of the transition.

\section{Bridge to quasi-dynamical symmetry: purity and coherence made literal}
\label{sec:kremer}

\begin{figure}[t]
\centering
\includegraphics[width=\textwidth]{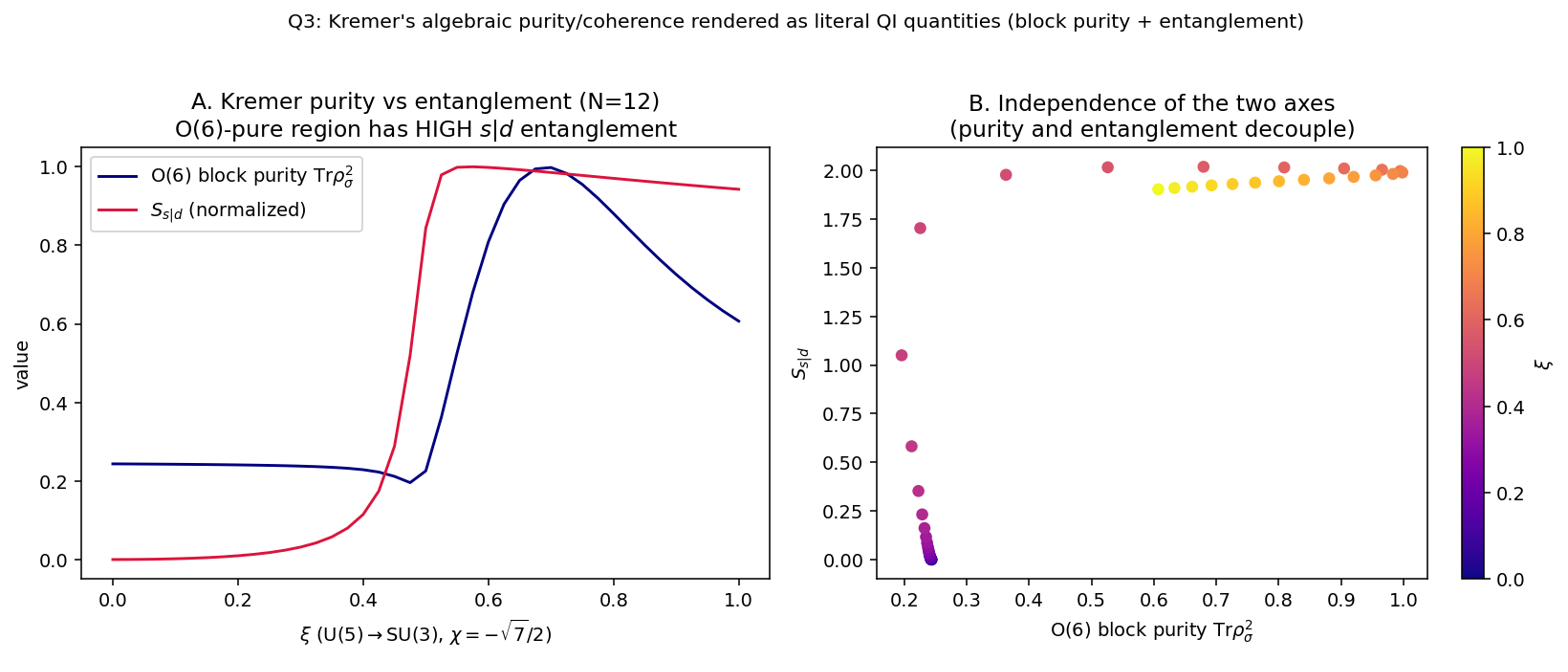}
\caption{The purity/coherence bridge on Kremer \emph{et al.}'s own states, $N=12$. The $\oo6$
block purity $\Tr\rho_\sigma^2$ (left) and the bipartite $s|d$ entanglement entropy (centre) are
independent axes: the ground state becomes $\oo6$-\emph{pure} ($\Tr\rho_\sigma^2 = 0.81$--$0.88$)
precisely where its $s|d$ entanglement is \emph{maximal} (right). The algebraic ``purity'' of the
quasi-dynamical-symmetry literature and bipartite entanglement are logically independent QI
quantities. Our $\oo6$ machinery reproduces Kremer's $\sigma$-fluctuation $\Delta\sigma_{\rm gs}=2.473$
at the $\uu5$ limit (their value $2.47$) and the exactly sharp $\oo6$ point.}
\label{fig:kremer}
\end{figure}

Kremer \emph{et al.}~\cite{Kremer2014} identified a region of IBM parameter space possessing both
$\oo6$-PDS, which they labelled ``purity,'' and $\su3$-QDS, which they labelled ``coherence,'' in
the ground band. Their diagnostic was the algebraic $\sigma$-fluctuation $\Delta\sigma_{\rm gs}$
together with the intrinsic-state formalism; they noted in passing that this fluctuation ``has the
same physical content as wave-function entropy,'' but did not construct a reduced density matrix,
a block purity, or a bipartite entanglement entropy. We reproduce their setting and render both
words as literal density-matrix quantities. Our $\oo6$ machinery matches their $\sigma$-fluctuation
$\Delta\sigma_{\rm gs}=2.473$ at the $\uu5$ limit (their value $2.47$) and gives an exactly sharp
$\oo6$ point. On their own states we then compute the $\oo6$ block purity $\Tr\rho_\sigma^2$ and
the bipartite $s|d$ entanglement, and find them to be \emph{independent} axes: the ground state
becomes $\oo6$-pure ($\Tr\rho_\sigma^2 = 0.81$--$0.88$) exactly where its $s|d$ entanglement is
\emph{maximal} (Fig.~\ref{fig:kremer}). This is a concrete demonstration of the logical
independence asserted after Theorem~\ref{thm:equiv}: a state can carry an almost exact
representation label (high block purity) while being maximally entangled across the boson
bipartition. The QDS ``purity/coherence'' distinction is thus real and quantitative, but it is a
statement about block structure, not about bipartite entanglement magnitude.

\section{Physical anchor: $^{168}$Er}
\label{sec:er168}

\begin{figure}[t]
\centering
\includegraphics[width=0.92\textwidth]{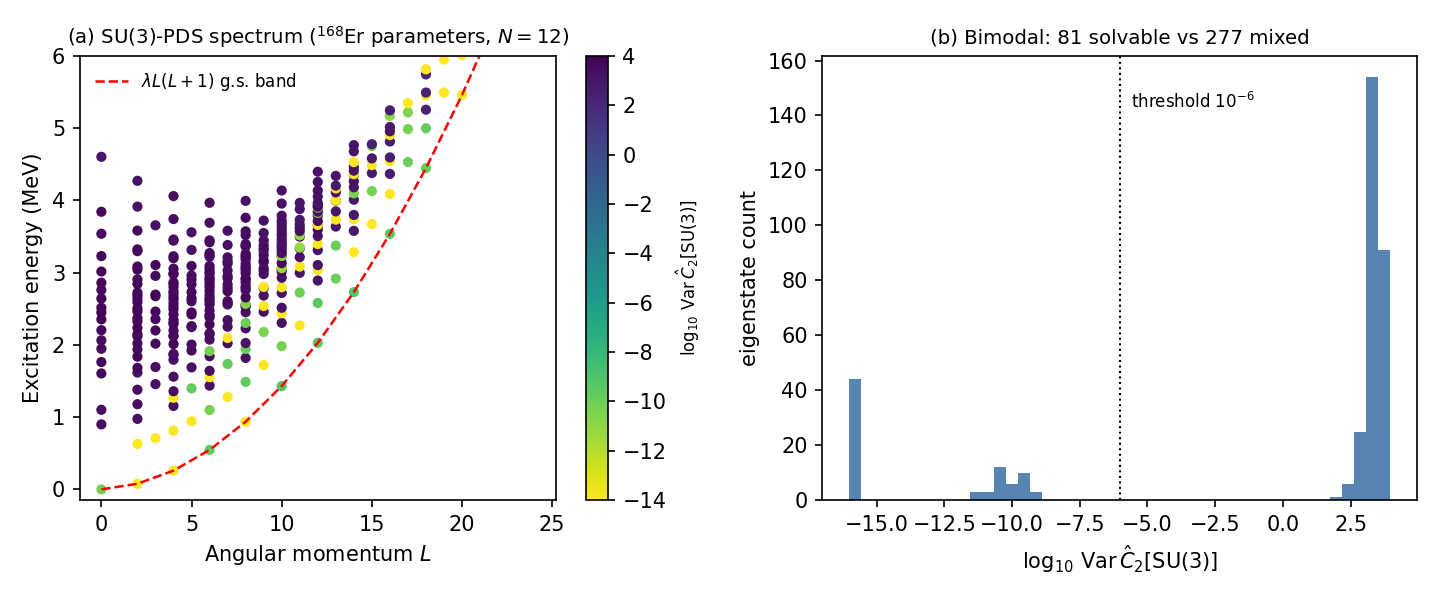}
\caption{Physical anchor with representative $^{168}$Er parameters ($h_0=0.008$, $h_2=0.004$, $\lambda=0.013$~MeV;
shown at $N=12$ for legibility). (a) The $\su3$-PDS spectrum coloured by $\log_{10}\Var\,\C[\su3]$:
the solvable ground band (dark, $\Var\approx10^{-14}$) sits exactly on the rigid-rotor line
$\lambda L(L+1)$ (dashed), embedded in the mixed background (light). (b) The label variance is
sharply bimodal, cleanly separating $81$ solvable states from the mixed remainder.}
\label{fig:er168}
\end{figure}

To test the diagnostic at realistic parameters we anchor in $^{168}$Er, a well-deformed
rare-earth rotor. We take representative values $h_0=0.008$, $h_2=0.004$~MeV in the scale of
Leviatan's $^{168}$Er analysis~\cite{Leviatan1996}, with the rotational parameter $C=\lambda=0.013$~MeV
fixed so that $\lambda L(L+1)$ reproduces the observed $E(2^+)\simeq0.080$~MeV. With these values the
solvable ground band follows a rigid rotor:
$E(2^+)=0.078$, $E(4^+)=0.260$, $E(6^+)=0.546$~MeV, matching the
diagonalized eigenvalues to $\lesssim10^{-15}$~MeV (the $(2N,0)$ band energy is $N$-independent and
exact). The $\gamma$ band head, $6h_2(2N-1)=0.744$~MeV at $N=16$, is likewise reproduced. The label
variance is bimodal and rigid at these physical parameters, with the solvable entanglement frozen
to $9\times10^{-13}$ under the $h_0$ flow; the $N=16$ calculation identifies $139$ solvable states,
in line with the band-projection scaling ($81, 108, 139$ for $N=12,14,16$).

The content of Fig.~\ref{fig:er168} is not that the rotor is recovered---that is built into the
Hamiltonian---but that the label variance partitions the spectrum \emph{from the wavefunctions
alone}, with no reference to energies or transition rates, and the partition coincides with the
band assignment a spectroscopist would make from the level scheme. The diagnostic thus reconstructs
the rotational-band membership structurally rather than by fitting.

We are careful about what this does and does not offer experimentally. The label variance is a
property of a calculated eigenstate, not a directly measured observable; its natural use is as a
classifier applied to IBM Hamiltonians that have already been fitted to data, sharpening the
qualitative statement ``these states are solvable, those are mixed'' into a graded, reproducible
number, and flagging which fitted Hamiltonians support a PDS subset without a separate band-by-band
analysis. Because it is a single number per eigenstate computed directly from the Hamiltonian, it
could in principle support automated screening: a library of fitted IBM parameter sets across the
rare-earth and other regions could be scanned and ranked by the size and composition of its
exact-label subset, a use we suggest rather than demonstrate here.
It connects to observables only indirectly: irrep admixtures already leave their trace in
$B(E2)$ branching ratios between nominal bands, so a large label variance predicts, and can be
cross-checked against, anomalous inter-band transition strengths. A quantitative map from label
variance to specific $B(E2)$ deviations is left for future work; here we establish the structural
side of that correspondence.

\section{Complementarity to other quantum-information resources}
\label{sec:complementary}

The label variance does not reduce to another QI measure. We checked it against two
independent resources at the stable PDS point. The quantum Fisher information (QFI) for the
collective quadrupole generator, a witness of multipartite entanglement, gives solvable states a
\emph{higher} QFI density than the mixed states, and the gap widens with system size:
$5.2$ versus $3.9$ ($N=6$), $6.3$ versus $4.5$ ($N=8$), $7.3$ versus $5.2$ ($N=10$). This
corroborates the fixed-$L$ result that solvable states are more, not less, entangled, but does not
produce the sharp $0$-versus-$O(N^2)$ separation of the label variance; it serves as a supporting
rather than a defining measure. Nonstabilizerness (magic), quantified by the stabilizer $2$-Rényi
entropy under the qubit encoding, is statistically uncorrelated with the label variance (correlation
$0.10$--$0.15$ across $N=4,6$) and so is a separate axis: the solvable set reaches lower magic than
the mixed states while remaining broadly distributed, and magic varies \emph{within} a single
symmetry band. The label variance is thus orthogonal to magic and to
multipartite entanglement, which meets the concern that bipartite entropy misses multipartite
structure~\cite{Brokemeier2025} without relying on either. QFI depends on the chosen generator and
magic on the encoding, so we treat the qualitative separations---not the specific values---as the
robust conclusions.

\section{Relation to prior work}
\label{sec:prior}

The closest prior art is the sustained interacting-boson entanglement programme of the Tabriz
group~\cite{Jafarizadeh2022,Ghapanvari2025}, which computes the $s|d$ von Neumann entropy of the
ground-state coherent (mean-field) condensate and tracks its variation across the Casten triangle,
finding it minimal at $\uu5$ and maximal at $\oo6$. That work is orthogonal to ours along several
axes, summarized in Table~\ref{tab:distinction}: state studied (ground-state condensate
versus full exact spectrum), quantity (entropy magnitude versus label variance), role of
entanglement (order parameter that varies versus a magnitude that is explicitly non-diagnostic),
symmetry content (dynamical-symmetry limits versus partial dynamical symmetry), parameter dependence
(entropy tracks the control parameter versus rigidity as invariance along the PDS-preserving cone),
and criticality (a single mean-field entropy peak versus two distinct label-variance fingerprints by
transition order). That programme does not address partial dynamical symmetry, and the label
variance, block coherence, block purity, and cone rigidity are neither computed nor required there.
We are not aware of prior work connecting partial or quasi dynamical symmetry to a
quantum-information measure, or resolving the IBM state by state in a quantum-information language.

\begin{table}[t]
\caption{Distinction from the closest prior interacting-boson entanglement work.}
\label{tab:distinction}
\small
\begin{tabular}{@{}p{0.30\columnwidth} p{0.30\columnwidth} p{0.30\columnwidth}@{}}
\toprule
Axis & Prior IBM-entanglement work & This work \\
\midrule
State studied & ground-state condensate (mean field) & full exact spectrum, state by state \\
Quantity & $s|d$ entropy & label variance / block coherence \\
Role of entanglement & order parameter that varies & magnitude is non-diagnostic \\
Symmetry content & dynamical symmetry only & partial dynamical symmetry \\
Parameter dependence & entropy tracks control parameter & rigidity = invariance on the cone \\
Criticality & single mean-field peak & two fingerprints by order \\
\bottomrule
\end{tabular}
\end{table}

\subsection*{Scope and limitations}
The results are established within the $sd$ interacting boson model, for the specific PDS
Hamiltonians of Leviatan~\cite{Leviatan1996,Leviatan2007,Leviatan2020} and the examples we
diagonalized, at the boson numbers stated in each case. We do not claim the label variance as a
universal detector of partial dynamical symmetry across arbitrary many-body systems; what we show is
that, in this model and these examples, it separates the solvable from the mixed states where the
magnitude of entanglement does not, and that the equivalence of Theorem~\ref{thm:equiv} holds
generally for any subalgebra with a quadratic Casimir. The quoted separations ($\Var\,\C=0$ versus
$O(N^2)$), state counts, thresholds, and correlations are numerical and example-specific, computed in
the $M=0$ block with the degeneracy resolution described in Sec.~\ref{sec:measurable}; the algebraic
band-projection counts and the closed-form band-head energies are exact. Extending the analysis to
other algebras, to the neutron--proton (IBM-2) and Bose--Fermi variants, and to a quantitative link
with measured $B(E2)$ systematics is left for future work.

\section{Quantum simulation of the fingerprint}
\label{sec:qsim}

The results above are obtained by exact diagonalization. Two features of the diagnostic make its
behavior on prepared states a question in its own right, rather than a routine simulation exercise.
First, it is a state-by-state quantity, and partial dynamical symmetry concerns a subset of states
scattered through the spectrum, not the ground state alone, so any device evaluation must prepare
excited states selectively, the setting where most quantum-information-in-nuclei work stops at the
ground state. Second, the label variance is defined on individual eigenstates, but a variational
solver returns, within a degenerate energy level, an arbitrary rotation of that level's states; we
show below that this obstructs a direct measurement of the label variance and that a specific,
hardware-compatible post-processing step, a Casimir measurement restricted to a degenerate energy
window, removes it. This is the concrete requirement a hardware implementation of the diagnostic
must meet, and it is the same degeneracy resolution used in the exact analysis
(Sec.~\ref{sec:measurable}), now carried onto prepared states.

We therefore encode the model on a qubit register, verify the encoding against the exact spectrum,
prepare the solvable and mixed eigenstates variationally, evaluate the label variance on them, and
show what the degeneracy resolution buys; a Trotterized propagator supplies the primitive a
rigidity-based test would use. Throughout, a statevector simulation of a model that is classically
diagonalizable is \emph{not} superior to classical diagonalization: the variational route is a
state-preparation and readout demonstration, and time evolution enters only as a tool, not as a
claim of quantum advantage. We follow the digital-simulation approach established for related
nuclear collective models~\cite{PerezFernandez2022,GarciaRamos2024}.

\subsection{Encoding and its verification}

The $sd$-IBM has six single-boson modes ($s$ and the five $d$ substates) and, crucially, no
truncation problem: the total boson number $N$ is the linear Casimir of $\uu6$, so the Hilbert
space is finite and its dimension is $\binom{N+5}{5}$. This is easier than generic bosonic
encodings, whose modes carry an infinite-dimensional Hilbert space requiring a controlled cutoff.
Standard schemes trade qubit count against operator locality: a one-hot (unary) register uses
$6(N+1)$ qubits with sparse, local ladder operators; a binary register per mode uses
$6\lceil\log_2(N+1)\rceil$ qubits with ladder operators that become sums of Pauli strings; and a
dense embedding of the $N$-conserving space uses $\lceil\log_2\binom{N+5}{5}\rceil$ qubits, with
compressed Gray-code variants having a nuclear-physics precedent~\cite{Singh2025}.
Table~\ref{tab:qubits} lists the counts; the compressed schemes place realistic $N$ within a
$16$--$30$ qubit register.

\begin{table}[t]
\caption{Qubit-count estimates for the $N$-conserving $sd$-IBM (six modes). ``binary/mode'' uses
$6\lceil\log_2(N+1)\rceil$ qubits; ``dense'' uses $\lceil\log_2\binom{N+5}{5}\rceil$.}
\label{tab:qubits}
\begin{tabular}{@{}rrrrr@{}}
\toprule
$N$ & Hilbert dim & unary & binary/mode & dense \\
\midrule
2  & 21    & 18  & 12 & 5 \\
3  & 56    & 24  & 12 & 6 \\
6  & 462   & 42  & 18 & 9 \\
10 & 3003  & 66  & 24 & 12 \\
16 & 20349 & 102 & 30 & 15 \\
\bottomrule
\end{tabular}
\end{table}

We verify that the mapped Hamiltonian reproduces the exact spectrum before any state preparation.
Embedding the physical Hamiltonian into $n=\lceil\log_2 d\rceil$ qubits (with the unphysical padding
levels shifted above the spectrum) and decomposing into Pauli strings, the eigenvalues of the qubit
operator restricted to the physical block reproduce the exact spectrum to
$\max|E_{\rm qubit}-E_{\rm exact}| = 1.1\times10^{-14}$ ($N=2$, $5$ qubits, $216$ Pauli terms) and
$2.8\times10^{-14}$ ($N=3$, $6$ qubits, $1551$ terms). The encoding is exact.

\subsection{Variational state preparation and the label variance}

\begin{figure}[t]
\centering
\includegraphics[width=0.9\textwidth]{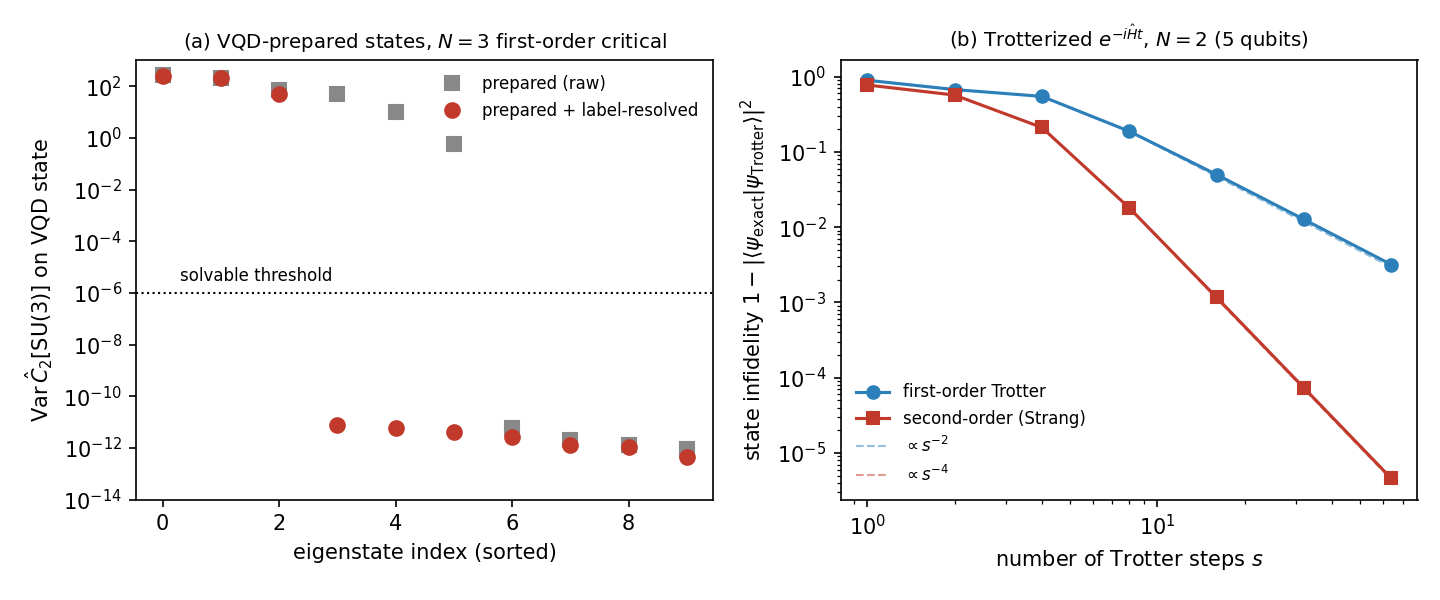}
\caption{Quantum simulation of the fingerprint (statevector, Aer). (a) The label variance
$\Var\,\C[\su3]$ evaluated on VQD-prepared eigenstates of the first-order critical Hamiltonian
($N=3$, $4$ qubits). Evaluated naively (grey squares), degenerate manifolds are returned as
arbitrary superpositions that mix labels and blur the separation; resolving the label within each
degenerate energy window (a hardware-compatible post-processing step) recovers the sharp
solvable/mixed split (red circles). (b) Trotter error of $e^{-i\hat Ht}$ against the number of
steps $s$ for a physical initial state ($N=2$, $5$ qubits, $t=0.5$): first-order and second-order
(Strang) products follow the expected $s^{-2}$ and $s^{-4}$ infidelity scaling (dashed guides).}
\label{fig:qsim}
\end{figure}

We prepare eigenstates with the variational quantum eigensolver (VQE) for the ground state and
variational quantum deflation (VQD) for the excited states~\cite{Higgott2019}---VQD being essential,
since the solvable states of a PDS Hamiltonian are scattered through the spectrum. We use the
hardware-efficient real ansatz of Fig.~\ref{fig:ansatz}: an initial layer of single-qubit $R_y$
rotations, followed by $L$ repetitions of a ring-of-CNOTs entangling block (a linear chain closed by
a wrap-around CNOT) and a further $R_y$ rotation layer, giving $n(L+1)$ real parameters on $n$
qubits. The rotations are kept real ($R_y$ only) because the Hamiltonian is real-symmetric in the
Fock basis, so its eigenvectors can be chosen real and no complex phases are needed; this halves the
parameter count relative to a generic ansatz. For the $N=3$ block ($n=4$) we take $L=6$, giving $28$
parameters. Optimization is gradient-based (L-BFGS-B) with multi-restart warm starts on the Aer
statevector simulator~\cite{Qiskit}; the deflation penalty $\beta\sum_{k}|\langle\phi_k|\psi\rangle|^2$
with $\beta$ large orthogonalizes each new state against those already found. We prepare all ten
eigenstates of the $N=3$ first-order critical block (a $10$-dimensional $M=0$ space) with
$\max|\Delta E| = 3.6\times10^{-13}$, far below the $10^{-4}$ target we require before trusting any
derived quantity.

\begin{figure}[t]
\centering
\includegraphics[width=0.98\textwidth]{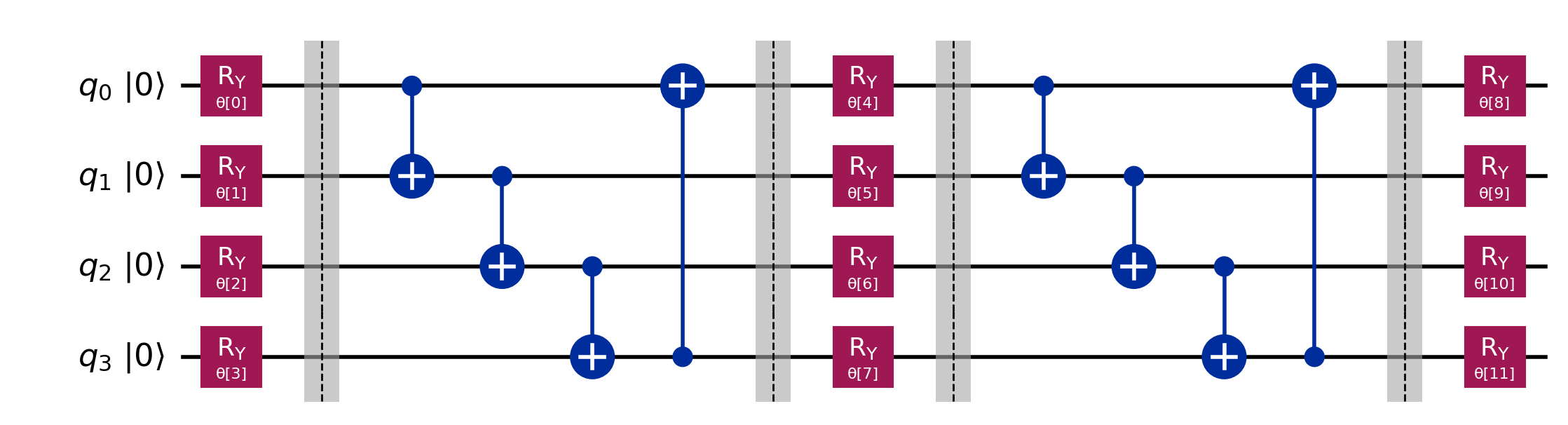}
\caption{The hardware-efficient real ansatz used for VQE/VQD state preparation, shown for $n=4$
qubits (the $N=3$ block) with $L=2$ entangling blocks for legibility; the runs use $L=6$. Each
block is a ring of CNOTs (linear chain plus a wrap-around gate) followed by a layer of parametrized
$R_y(\theta)$ rotations, preceded by an initial $R_y$ layer, for a total of $n(L+1)$ real parameters.
Only $R_y$ rotations appear because the Fock-basis Hamiltonian is real-symmetric and its eigenvectors
can be chosen real. Barriers mark the block boundaries and are not gates.}
\label{fig:ansatz}
\end{figure}

Reading the label variance off the prepared states is not automatic, and the reason is the crux of
what a device implementation must handle. A variational solver returns, within a degenerate energy
level, an \emph{arbitrary} superposition of that level's exact eigenstates; such a superposition
mixes irrep labels and so shows a spuriously large $\Var\,\C[\su3]$ even when the level is spanned by
solvable states [Fig.~\ref{fig:qsim}(a), grey]. The label variance is well defined on individual
eigenstates, but degeneracy, which is generic in these symmetric spectra, and is exactly where the
solvable bands live, makes the naive prepared-state estimate unreliable. The fix is the same
degeneracy resolution used in the exact analysis (Sec.~\ref{sec:measurable}), now applied to prepared
states: within each degenerate energy window one measures $\C[G]$ and diagonalizes it on that window,
which recovers the sharp separation [Fig.~\ref{fig:qsim}(a), red], seven of the ten states register
as solvable after resolution, matching the exact count up to a single edge state at this small $N$,
where a $\uu5$-pure state shares a level with $\su3$-pure states. This is a concrete and modest
requirement: the resolution needs only a Casimir measurement within an energy window, not additional
coherence or a deeper circuit, so the diagnostic remains preparable and measurable on hardware
provided degeneracies are grouped before the label variance is read.

\subsection{Trotterized time evolution}

Time evolution enters the project only as a tool, and only here. We construct the Trotterized
propagator $e^{-i\hat Ht}$ for the encoded Hamiltonian and benchmark it against exact matrix
exponentiation for a physical initial state [Fig.~\ref{fig:qsim}(b), $N=2$, $5$ qubits, $t=0.5$].
The first-order product formula converges with state infidelity $\propto s^{-2}$ in the number of
steps $s$, and the second-order (Strang) formula with $\propto s^{-4}$, reaching an infidelity of
$4.6\times10^{-6}$ at $s=64$. The encoded model is thus straightforwardly time-simulable,
and this supplies the primitive that a rigidity-based test would use: evolve under two members of a
PDS-preserving operator family and threshold the drift of the prepared state's entanglement, a
protocol that requires no fit to spectroscopic data. We report these as engineering benchmarks; no
physics result in this paper depends on time evolution.

\section{Summary and outlook}
\label{sec:outlook}

We have shown that partial dynamical symmetry leaves a sharp, fit-independent signature in the
eigenstates of the interacting boson model: the vanishing of a symmetry-label fluctuation on a
structurally selected set of states. This fluctuation has three equivalent faces, label variance,
block-coherence entropy, and block impurity, and the block-purity face makes the ``purity'' and
``coherence'' language of the quasi-dynamical-symmetry literature literal. The label variance is
exactly zero on the solvable subset and of order $N^2$ on the mixed states; it persists at
first- and second-order critical points as two different signatures set by the order of the
transition; it survives at realistic $^{168}$Er parameters, selecting the states a spectroscopist
assigns to the rotational band; and it is uncorrelated with multipartite entanglement and with
magic. The magnitude of bipartite entanglement, by contrast, does not separate solvable from mixed
states and drifts even where the labels are exact.

As a state-resolved criterion independent of spectroscopic fitting, the diagnostic can classify
fitted IBM Hamiltonians and screen for candidate PDS nuclei. Its cone-rigidity form, compare two
points of a PDS-preserving operator family and threshold the drift, is well suited to a device
implementation, for which Sec.~\ref{sec:qsim} supplies the encoding, state-preparation, and
time-evolution primitives; scaling that demonstration to the qubit registers of
Table~\ref{tab:qubits} on hardware, with the degeneracy-resolved Casimir measurement as the readout,
is the natural next step. The broader result is a quantum-information reading of partial dynamical
symmetry, a notion developed so far with mean-field and group-theoretic tools.

\section*{Acknowledgments}
The author thanks Dr. Subhendu Rajbanshi of Dept. of Physics, Presidency University for allowing the author to use his lab as a Visiting Researcher, and for his continuous support, valuable discussions, and insightful guidance throughout this work.

\end{document}